\newtheorem{thm}{Theorem}
\newcommand\commentout[1]{} 
\def\beq{\begin{eqnarray}}
\def\eeq{\end{eqnarray}}
\numberwithin{equation}{section}
\begin{document}
\title{Cartan Invariants and Event Horizon Detection}

\author{D. D. McNutt \and M.A.H. MacCallum \and D. Gregoris \and A. Forget \and A. A. Coley \and P. C. Chavy-Waddy \and D. Brooks}

\institute{D. D. McNutt \at Faculty of Science and Technology, University
 of Stavanger, N-4036 Stavanger, Norway
\email{david.d.mcnutt@uis.no}
 \and M. A. H. MacCallum \at  School of Mathematical Sciences, Queen
 Mary University of London, Mile End Road, London E1 4NS
\email{m.a.h.maccallum@qmul.ac.uk}
 \and D. Brooks \and P. C. Chavy-Waddy \and A. A. Coley \and
  A. Forget \and D. Gregoris \at
 Department of Mathematics and Statistics, Dalhousie University, 
 Halifax, Nova Scotia, Canada B3H 3J5 \email{dario.a.brooks@dal.ca,
   pl859327@dal.ca, aac@mathstat.dal.ca, Adam.AL.Forget@dal.ca,
   danielegregoris@libero.it}
  }

\date{}
\maketitle   
  
%




\begin{abstract}

We show that it is possible to locate the event horizons of a black hole (in arbitrary dimensions) as the zeros of certain Cartan invariants.  This approach accounts for the recent results on the detection of stationary horizons using scalar polynomial curvature invariants, and improves upon them since the proposed method is computationally less expensive.  As an application, we  produce Cartan invariants that locate the event horizons for various exact four-dimensional and five-dimensional stationary, asymptotically flat (or (anti) de Sitter) black hole solutions and compare the Cartan invariants with the corresponding scalar curvature invariants that detect the event horizon. In particular, for each of the four-dimensional examples we express the scalar polynomial curvature invariants introduced by Abdelqader and Lake in terms of the Cartan invariants and show a direct relationship between the  scalar polynomial curvature invariants and the Cartan invariants that detect the horizon. 
\end{abstract}

\section{Introduction}

General Relativity predicts the existence of singularities hidden by a horizon \citep{GRrefs}. Remarkably, this was noticed a year after the appearance of the Einstein field equations when Schwarzschild published a solution describing an isolated non-rotating massive object. Additional exact solutions to Einstein field equations have been found which exhibit this property \citep{Stephani}.  Improvements in astrophysical observations have allowed black holes to be distinguished from other highly massive objects like neutron stars in our universe suggesting the physical relevance of such general relativistic metrics \citep{astro1,astro1b, astro2, astro3}.

Naively speaking one can regard a black hole as a region of spacetime from which nothing can escape, i.e., after crossing the horizon towards the singularity, a photon can never escape to asymptotic infinity. While this captures the basic property of black holes it is clearly unsatisfactory in general relativity, which was constructed as a local theory, and the definition of an event horizon requires global information on the entire spacetime \citep{choq1, choq2}. Due to the contradictory nature of these two facts it is desirable to find alternative definitions or characterizations of black hole horizons that are quasi-local. For example, a local characterization of the horizon of a black hole is necessary in the numerical study of the evolution of configurations of many black holes. 

At this time, only approximate localizations are possible, such as considering the event horizon as a marginally outer trapped surface, a minimal surface, a Killing horizon or an apparent horizon \citep{Ashtekar, booth} which are also foliation dependent. Recently it was shown that specific combinations of the scalar polynomial curvature invariants (SPIs) (see next sections for their definition) vanish on the horizon of a stationary black hole. This provides a local technique for the localization of the event horizon, and an extension of \cite{PRM1993} allowing for the extraction of information about the mass, angular momentum and electric charge of a black hole  \citep{AbdelqaderLake2015, PageShoom2015}.

In this paper we will show that it is possible to locate the horizon of any stationary asymptotically flat (or (anti) de Sitter)  black hole using Cartan invariants. While both the SPIs and Cartan invariants are foliation independent, the Cartan invariants have two important advantages over SPIs: they are linear in terms of the components of the curvature tensor instead of quadratic or higher degree terms, and it is possible to construct from the Cartan invariants suitable invariants that vanish on the horizon and nowhere else, eliminating the problem of SPIs detecting surfaces outside of the horizon.
 
Using the Cartan-Karlhede algorithm (known as the Karlhede algorithm in 4D) \citep{ref1, ref2a, DAM2016} we briefly discuss the classification of metrics as a necessary step for  the computation of the Cartan invariants. We apply our method to four-dimensional (4D) black hole solutions and the less studied five dimensional (5D) black hole solutions \citep{string1, string2}. Finally, we compute SPIs for the 4D and 5D examples using the results of \cite{PageShoom2015}, and in 4D we show how the Cartan invariants are related to the SPIs: thereby the rather complicated expressions used for the SPIs in previous work are shown to have simpler forms.

\section{Horizon Detection with Scalar Polynomial Curvature invariants}

In this section we will review some basic properties of the SPIs that will be useful in the applications discussed in this paper.
In 1869, Christoffel showed that any scalar function on a $n$-dimensional Riemannian (or pseudo-Riemannian) manifold $(\textbf{M},g_{ab})$ constructed from the metric $g_{ab}$ must be a function of $R_{abcd}, R_{abcd;e}$ and  higher order covariant derivatives \citep{MacCallum2015}.  Due to the nature of SPIs, they are one of the conceptually simplest of such scalar functions.  The SPIs of a given spacetime metric, $g_{ab}$, are the set of functions generated by operations on (contractions of) the curvature tensors, and their covariant derivatives, such as \beq R_{ab}R^{ab}, C_{abcd}C^{abef}C_{ef}^{~~cd}, R_{ab;c}R^{ab;c},
C_{abcd;e}C^{abcd;e}. \eeq  We denote by $\mathcal{I} = \{R, R_{ab}R^{ab}, C_{abcd}C^{abcd}, \dots \}$ \noindent  the set of SPIs of $\textbf{M}$. Some basic examples, denoted by $I_1, \dots, I_7$ \citep{AbdelqaderLake2015,PageShoom2015}, are the following:

\begin{eqnarray}
\begin{aligned}\label{IInvariant}
& I_1 = C^{abcd}C_{abcd} , \qquad 
I_2 = C^{*abcd}C_{abcd} , \qquad 
I_3 = C^{abcd;e}C_{abcd;e} ,& \\
&I_4 = C^{*abcd;e}C_{abcd;e} , \qquad
I_5 = (I_1)_{;a}(I_1)^{;a} , \qquad
I_6 = (I_2)_{;a}(I_2)^{;a} , \qquad 
I_7 = (I_1)_{;a}(I_2)^{;a}, &
\end{aligned}
\end{eqnarray}

\noindent where $C_{abcd}$ is the Weyl tensor and $C^{*}_{abcd}$ is its dual and a semicolon denotes covariant differentiation. \\
We stress that the maximum number of functionally independent and maximum algebraically independent SPIs are in general different, the former being {\it at most} $n$, while the latter is 
\begin{equation}
N(n,p) = 
	\begin{cases}
		0 & \quad \text{if } p = 0 \text{ or } 1 \\
		\frac{n[(n+1)][(n+p)!]}{2n!p!} - \frac{(n+p+1)!}{(n-1)!(p+1)!} + n & \quad \text{if } p 			\geq 2 \\
	\end{cases}
\end{equation}
where $p$ denotes the order of differentiation of the metric tensor components.

Black hole horizon detection was remarked upon by \cite{KLA1982}, where the invariant $R_{abcd;e}R^{abcd;e}$ was shown to detect horizons for several type D solutions. However in the case of the Kerr horizon, it detected the stationary limit, and not the outer horizon itself. This was first noted by Skea in his  doctoral thesis \citep{Ske86} where it was observed that $R_{abcd;e}R^{abcd;e}$ did not provide an adequate test for horizons. More recently  \cite{AbdelqaderLake2015} examined a collection of invariants from which they determined physical properties of spacetimes around rotating black holes, including the detection of the horizons. These invariants are constructed from SPIs (note that being in vacuum we do not distinguish between Riemann and Weyl tensors):

\beq
\label{Q}
 Q_1 =  \frac{(I_1^2 - I_2^2)(I_5 - I_6)+4I_1 I_2 I_7}{3\sqrt{3}(I_1^2+I_2^2)^\frac{9}{4}}  , \quad 
Q_2 =  \frac{I_5 I_6 - I_7^2}{27 (I_1^2 + I_2^2)^\frac{5}{2}}  , \quad 
Q_3 =  \frac{I_5 + I_6}{6\sqrt{3} (I_1^2 + I_2^2)^\frac{5}{4}}, 
\eeq

\noindent where $I_1$ to $I_7$ are given by (\ref{IInvariant}).  From the dimensionless invariants $Q_1$, $Q_2$ and $Q_3$ one can read off the physical properties of the Kerr metric since they locate the horizon and ergosurface in an algebraic manner. With this information two approaches were provided to compute the angular momentum and mass of the black hole, one global and the other local. To determine the mass and angular momentum in the global approach, the area of the horizon and of the ergosurface must be calculated, requiring that these two surfaces must be located. 
The local method, which makes use of (\ref{IInvariant}) alone,  does not require knowledge of the location of the black hole or its event horizon. By knowing the forms of the invariants $I_1,...,I_7$ the mass and angular momentum can be expressed as functions in terms of these invariants. The derivation outlined by  \cite{AbdelqaderLake2015} is not unique, and the authors noted that the presented approach was the simplest found through experimentation. 

The relationship between (\ref{Q}) and (\ref{IInvariant}) has been expanded by \cite{PageShoom2015} in which the authors introduce a general approach to determine the location of the event horizon and ergosurface for the Kerr metric. More generally, this method will give the exact location of any horizon for a stationary black hole, although it is believed that it will be able to determine the approximate location for any nearly stationary horizon. This technique relies on the fact that the squared norm of the wedge product of $n$ gradients of functionally independent local smooth curvature invariants will always vanish on the horizon of any stationary black hole, where $n$ is the local cohomogeneity of the metric, which is defined as the codimension of the maximal dimensional orbits of the isometry group of the local metric. Their results can be summarized by the following theorem:

\begin{thm} \label{PSthrm}
For a spacetime of local cohomogeneity $n$ that contains a stationary horizon (a null hypersurface that is orthogonal to a Killing vector field that is null there and hence lies within the hypersurface and is its null generator) and which has $n$ independent SPIs $S^{(i)}$ whose gradients are well-defined there, the $n$-form wedge product 
\beq W = dS^{(1)} \wedge ... \wedge dS^{(n)} \nonumber \eeq
\noindent has zero squared norm on the horizon, 

\beq ||W||^2 = \frac{1}{n!} \delta^{\alpha_1,...,\alpha_n}_{\beta_1,...,\beta_n} g^{\beta_1 \gamma_1} ... g^{\beta_n \gamma_n} \times S^{(1)}_{;\alpha_1}...S^{(n)}_{;\alpha_n} S^{(1)}_{;\gamma_1}...S^{(n)}_{;\gamma_n} = 0, \nonumber \eeq
\noindent where the permutation tensor $\delta^{\alpha_1,...,\alpha_n}_{\beta_1,...,\beta_n}$ is $+1$ or $-1$ if $\alpha_1,...,\alpha_n$ is an even or odd permutation of $\beta_1,...,\beta_n$ respectively, and is zero otherwise. 
\end{thm}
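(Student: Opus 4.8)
The plan is to recognise $\|W\|^2$ as the Gram determinant of the gradients of the invariants and then to exhibit an explicit null vector of the associated Gram matrix on the horizon. Writing $G_{ij} = g^{ab} S^{(i)}_{;a} S^{(j)}_{;b}$ for the matrix of inner products of the gradient one-forms $dS^{(i)}$, the antisymmetrisation encoded by the generalised Kronecker delta together with the factor $1/n!$ is exactly the standard component form of $\det(\langle dS^{(i)},dS^{(j)}\rangle)$, so $\|W\|^2 = \det G$. The theorem therefore reduces to showing that $G$ is singular at every point of the horizon, i.e. that it acquires a nontrivial kernel there. Note that this does not require $W$ itself to vanish: in a Lorentzian setting a nonzero decomposable $n$-form can have zero squared norm.

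To produce such a kernel I would use the defining property of the SPIs, namely their invariance under the isometry group. Each $S^{(i)}$ is then constant along the orbits $\mathcal{O}$ of the isometry group, so every raised gradient $g^{ab}S^{(i)}_{;b}$ is orthogonal to the orbit tangent space $T\mathcal{O}$. Because the local cohomogeneity is $n$, the orbits have codimension $n$, so $(T\mathcal{O})^\perp$ is $n$-dimensional; and since the $n$ invariants are functionally independent their differentials are linearly independent, whence, after raising indices, the vectors $\{g^{ab}S^{(i)}_{;b}\}_{i=1}^{n}$ span all of $(T\mathcal{O})^\perp$. Moreover, for the Killing field $\xi$ that generates the horizon one has $\xi^a S^{(i)}_{;a} = \mathcal{L}_\xi S^{(i)} = 0$ for every $i$, which says that $\xi$ is orthogonal to each raised gradient.

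The geometric heart of the argument is to place $\xi$ itself inside the span of the gradients. On the horizon $\xi$ is null and is the null normal to the null hypersurface $\mathcal{H}$, so $T\mathcal{H} = \xi^\perp$. Since the horizon is invariantly characterised (it is a Killing horizon, carried to itself by the isometry group), every Killing vector is tangent to $\mathcal{H}$, whence $T\mathcal{O} \subseteq \xi^\perp$ and therefore $\xi \in (T\mathcal{O})^\perp$. Combining this with the previous paragraph, $\xi$ lies in the span of the raised gradients, so one may write $\xi^a = \sum_i c_i\, g^{ab} S^{(i)}_{;b}$ with the $c_i$ not all zero. Contracting with $S^{(j)}_{;a}$ and using $\xi^a S^{(j)}_{;a}=0$ gives $\sum_i c_i G_{ij} = 0$, so $(c_i)$ is a nonzero null vector of $G$ and $\det G = \|W\|^2 = 0$ on the horizon, as claimed.

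The step I expect to be the main obstacle is the geometric input that the isometry orbits are tangent to the horizon, equivalently that $\xi$ is orthogonal there to every Killing vector and not merely to itself. For a Killing horizon this should follow from the horizon being preserved by the connected isometry group, but it must be argued carefully; in the stationary axisymmetric case where $\xi = \partial_t + \Omega_H \partial_\phi$ one needs both $\partial_t$ and $\partial_\phi$ to be tangent to $\mathcal{H}$, and only their particular null combination degenerates there. A secondary technical point is the hypothesis that the gradients are well-defined (finite) on the horizon, which is needed both for $\det G$ to be evaluated in the limit and for the dimension count identifying $\mathrm{span}\{g^{ab}S^{(i)}_{;b}\}$ with $(T\mathcal{O})^\perp$ to remain valid up to the horizon.
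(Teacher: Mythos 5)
Your proposal is correct, and it is essentially the proof of this theorem given by Page and Shoom (2015): the paper itself does not prove Theorem \ref{PSthrm} but presents it as a summary of their results (its own proof of Theorem \ref{DPSthrm} simply defers to that argument). Your reconstruction --- identifying $\|W\|^2$ with the Gram determinant $\det G$, noting that the raised gradients span the orthogonal complement of the isometry orbits, and showing that the horizon-generating Killing vector $\xi$ lies in that span (because the orbits are tangent to the horizon) while annihilating every gradient, hence furnishing a kernel vector of $G$ --- is the same route, including the correct identification of the one genuinely delicate step (tangency of the orbits to the horizon).
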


In general the set $\mathcal{I}$ is not sufficient to  locally distinguish the manifold $\textbf{M}$ as it is possible that two different metrics can have the same set $\mathcal{I}$. In the particular case in which it \emph{is} fully characterized by its SPIs the spacetime is said to be $\mathcal{I}$-non-degenerate \citep{ref6}. If a spacetime metric is of Ricci type I, Weyl type I, or Riemann type I/G (relative to the alignment classification, which is reviewed in section \ref{CKalg5D}), then the metric is $\mathcal{I}$-non-degenerate. Moreover, in the case that the metric is not $\mathcal{I}$-non-degenerate, then it is necessarily contained in the Kundt class or is locally homogeneous \citep{ref6}. We note that all of the black hole metrics considered in this paper are $\mathcal{I}$-non-degenerate.

\section{The Cartan-Karlhede Method for Determining Local Equivalence of Spacetimes}

The method for testing geometric equivalence due to \'Elie Cartan \citep{Stephani} was developed with the aim of determining the equivalence of geometric objects under a diffeomorphism. The method we employ in our paper is a specialized form applicable to sets of differential forms defined on differentiable manifolds under appropriate transformation groups. In Riemannian geometry, the formal statement of this problem is:

\bigskip
\noindent{\it Let $({\bf M}, g)$ and $({\bf \bar{M}},\bar{g})$ be two $m$-dimensional Riemannian manifolds. We say that $g$ and $\bar{g}$ are equivalent as Riemannian metrics if $$\Phi^*(\bar{g})=g,$$ for a locally-defined diffeomorphism $\Phi:\textbf{M}\rightarrow \bar{\textbf{M}}$. When does such a diffeomorphism exist?}
\bigskip


To relate two apparently different metrics, we look to the coordinate neighbourhoods defined on the manifold at each point and examine the frame bundle on the manifold. If the metrics on two neighbourhoods are equivalent, the frame bundles derived from them are (locally) identical.  Now, the frame bundle on each manifold possesses uniquely-defined one-form fields $\{\boldsymbol{\omega}^a, {\bf \Gamma}^a_{~ b}\}$ such that 
$${\bf \Gamma}^a_{~ b}=\Gamma^a_{ \hspace{2 mm}bc}\boldsymbol{\omega}^c, \hspace{5mm} \Gamma^a_{\hspace{2 mm}bc}=\langle \boldsymbol{\omega}^a,\nabla_c{\bf e}_b\rangle,$$
where $\{{\bf e}_a\}$ is the corresponding basis of tangent vectors and $\{\boldsymbol{\omega}^a\}$ the basis of one-forms. In local coordinates $\{x^1,\dots, x^m\}$,
$${\bf e}_a=e_a^i\frac{\partial}{\partial x^i}, \hspace{5mm} \boldsymbol{\omega}^a=\omega^a_i dx^i.$$
Thus, the basis one-form fields on the frame bundle ($\Gamma, \omega$) defined on each frame bundle must also be identical for equivalent metrics. In the case of spacetimes, the Cartan structure equations imply: 
\begin{equation}
d\boldsymbol{\omega}^a=-\boldsymbol{\Gamma}^a_{~ b}\wedge\boldsymbol{\omega}^b, \label{Cartan1}
\end{equation}
\begin{equation}
d\boldsymbol{\Gamma}^a_{~ b}+\boldsymbol{\Gamma}^a_{~c}\wedge\boldsymbol{\Gamma}^c_{~ b}=\boldsymbol{\Theta}^a_{~b},~~{\bf \Theta}^a_{~b} = R^a_{~bcd} \boldsymbol{\omega}^c \wedge \boldsymbol{\omega}^d.\label{Cartan2}
\end{equation}
Hence the Cartan structure equations show that the above implies the components of the curvature on the frame bundle must also be equatable.

However, the equability  of \eqref{Cartan1} and \eqref{Cartan2} on the two neighbourhoods are necessary conditions, and not sufficient for the local identification of differentiable manifolds. Cartan showed that sufficient conditions are obtained by taking repeated exterior derivatives, starting with $d \Gamma^a_{~b}$, until no new functionally independent quantity arises; if at any step of differentiation no such quantity arises, the process terminates, as any further derivatives depend on the quantities previously obtained. Consequently, the relations between dependent and independent invariants must be the same in coordinate neighbourhoods of points in both manifolds in order for them to be equivalent. The number of $k$ functionally independent quantities obtained (called the {\it rank}) is at most the dimension $m$ of the manifold, and so the process necessarily terminates in a finite number of steps; if it turns out that $k<m$ this is due to the presence of symmetries.

In the case of spacetimes, the equation
$$dR_{abcd}=R_{abcd;e}\boldsymbol{\omega}^e+R_{ebcd}\boldsymbol{\Gamma}^e_a+R_{aecd}\boldsymbol{\Gamma}^e_b+R_{abed}\boldsymbol{\Gamma}^e_c+R_{abce}\boldsymbol{\Gamma}^e_d$$
shows that repeated exterior differentiation is equivalent to repeatedly taking covariant derivatives of the Riemann tensor $R_{abcd}$. A metric can consequently be uniquely (locally) characterized by its Riemann tensor and a finite number of its covariant derivatives, regarded as functions on the frame bundle of the manifold. If we use $R^q$ to denote the set $\{R_{abcd}, R_{abcd;f}\dots,R_{abcd;f_1f_2\dots f_q}\}$ of the components of the Riemann tensor and its covariant derivatives up to the $q$th order, then if $p$ is the last derivative at which a new functionally independent quantity arises (the {\it order}), we must compute $R^{p+1}$. If $k$ is the number of functionally independent invariants on the frame bundle in a maximal set, we denote the invariants by $I^{\alpha}$, $\alpha=1,\dots ,k$.


The main idea of this method is to reduce the frame bundle to the smallest possible dimension at each step by casting the curvature and its covariant derivatives into a canonical form and only permitting those frame changes which preserve that canonical form. The frames we will employ in 4D are the so-called null tetrads, i.e. a set of four complex vectors $\{l^a, n^a, m^a, {\bar m^a}\}$ such that $l^al_a=n^a n_a=m^am_a= {\bar m^a}{\bar m_a}=0$ and $l_a n^a=1=m^a {\bar m_a}$ and where a bar denotes complex conjugate. In terms of this complex null tetrad the metric is 
\beq
ds^2=-2l_{ (a }n_{ b) }+2m_{ (a }{\bar m  }_{b)},
\eeq
\noindent where round parentheses denote symmetrization.

Since there are many solutions to Einstein's equations describing vacuum spacetimes and few which admit conformally flat geometries, we usually begin by putting the Weyl tensor into the appropriate normal form (See \cite{Stephani} section 4.2, table 4.2.), and then using any residual frame freedom to put the Ricci tensor $R_{ab}=R^c_{~acb}$ into canonical form, if possible. The curvature components in this tetrad are the first set of invariants required. We then calculate the first (covariant) derivatives of the curvature and use them to further fix the tetrad, if necessary. This is repeated for higher derivatives until the stopping conditions are met for the algorithm, which will be discussed below.

The Cartan-Karlhede algorithm that we will use in the next section is \citep{Mac86}:
\begin{enumerate}
\item Set the order of differentiation $q$ to 0.
\item Calculate the derivatives of the Riemann tensor up to the $q$th order.
\item Find the canonical form of the Riemann tensor and its covariant derivatives.
\item Fix the frame as much as possible using this canonical form, and note the residual frame freedom (the group of allowed transformations is the {\it linear isotropy group $H_q$}). The dimension of $H_q$ is the dimension of the remaining {\it vertical} freedom of the frame bundle.
\item Find the number $t_q$ of independent functions of space-time position in the components of the Riemann tensor and its covariant derivatives, in the canonical form. This tells us the remaining {\it horizontal} freedom.
\item If the isotropy group and number of independent functions are the same as in the previous step, let $p+1=q$, and the algorithm terminates; if they differ (or if $q=0$), increase $q$ by 1 and go to step 2. 
\end{enumerate}

The nonzero components of $R_{abcd}$ and its covariant
derivatives are referred to as {\it Cartan invariants}: a statement of
the minimal set required, taking Bianchi and Ricci identities into
account, was given by \cite{MacAma86}. We will refer to the invariants constructed from, or equal to, Cartan invariants of any order as {\it extended invariants}. Thus  for sufficiently smooth metrics, a result of the test of equivalence gives sets of scalars providing a unique local geometric characterization, as the $D$-dimensional space-time is then characterized by the canonical form used, the {two discrete sequences arising from the } successive isotropy groups and the independent function counts, and the values of the (nonzero) Cartan invariants. As there are $t_p$ essential space-time coordinates, the remaining $D-t_p$ are ignorable, and so the dimension of the isotropy group of the space-time will be $s=\dim(H_p)$, and the isometry group has dimension $r=s+D-t_p$. 

Theorem \ref{PSthrm} can be readily generalized to the set of Cartan invariants arising from the Cartan-Karlhede algorithm:

\begin{thm} \label{DPSthrm}
For a spacetime of local cohomogeneity $n$ that contains a stationary horizon and which has $n$ independent Cartan invariants $C^{(i)}$ whose gradients are well-defined there, the $n$-form wedge product 
\beq W = dC^{(1)} \wedge ... \wedge dC^{(n)} \nonumber \eeq
\noindent has zero squared norm on the horizon, 

\beq ||W||^2 = \frac{1}{n!} \delta^{\alpha_1,...,\alpha_n}_{\beta_1,...,\beta_n} g^{\beta_1 \gamma_1} ... g^{\beta_n \gamma_n} \times C^{(1)}_{;\alpha_1}...C^{(n)}_{;\alpha_n} C^{(1)}_{;\gamma_1}...C^{(n)}_{;\gamma_n} = 0, \nonumber \eeq
\noindent where the permutation tensor $\delta^{\alpha_1,...,\alpha_n}_{\beta_1,...,\beta_n}$ is $+1$ or $-1$ if $\alpha_1,...,\alpha_n$ is an even or odd permutation of $\beta_1,...,\beta_n$ respectively, and is zero otherwise. 
\end{thm}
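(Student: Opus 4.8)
The plan is to observe that the proof of Theorem \ref{PSthrm} given by \cite{PageShoom2015} never uses the fact that the $S^{(i)}$ are polynomial in the curvature; it uses only two structural properties, both of which the Cartan invariants share. I would therefore prove Theorem \ref{DPSthrm} by transcribing that argument with $S^{(i)}$ replaced by $C^{(i)}$, after first verifying those two properties in the Cartan setting. The two properties are: (a) each $C^{(i)}$ is a scalar function annihilated by every Killing vector, $\mathcal{L}_\xi C^{(i)} = \xi^a C^{(i)}_{;a} = 0$, because the Cartan invariants are components of $R^q$ in a frame fixed canonically by the geometry, and an isometry maps this canonical frame to a canonically equivalent one, so the invariants are constant along its orbits; and (b) the number of functionally independent Cartan invariants is exactly $t_p$, the number of essential coordinates, which equals the cohomogeneity $n$.

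Properties (a) and (b) together control the gradients. Property (b) guarantees that $n$ functionally independent $C^{(i)}$ have linearly independent gradients, while property (a) places each $dC^{(i)}$ in the space of covectors annihilating the orbit tangent space $T\mathcal{O}$. Since that annihilator has dimension $D - \dim T\mathcal{O} = n$, a dimension count shows the $dC^{(i)}$ span it wherever they are well-defined. This is the exact analogue of the spanning statement used for the SPIs.

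With these in hand the geometric core is immediate. On the horizon let $\xi$ be the null Killing generator and $\xi_a = g_{ab}\xi^b$ the associated covector, which is the null normal of the null hypersurface. Because the isometries preserve the horizon, the orbits lie within it, so every orbit-tangent vector is annihilated by the normal $\xi_a$; hence $\xi_a$ lies in the annihilator of $T\mathcal{O}$ and, by the spanning statement, $\xi_a = \sum_i c_i C^{(i)}_{;a}$ for constants $c_i$ not all zero. Using (a), $g^{ab}\xi_a C^{(j)}_{;b} = \xi^b C^{(j)}_{;b} = 0$ for every $j$, and $\xi_a$ is null, so the coefficient vector $(c_i)$ lies in the kernel of the Gram matrix $G_{ij} = g^{ab} C^{(i)}_{;a} C^{(j)}_{;b}$, giving $\det G = 0$. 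Finally, contracting the generalized Kronecker delta in the stated expression antisymmetrises the gradients into precisely this Gram determinant, so $||W||^2 = \det G = 0$ on the horizon.

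The main obstacle is entirely contained in properties (a) and (b): one must ensure that the functionally independent Cartan invariants can be taken as genuine scalar functions killed by the Killing vectors even when the algorithm leaves a residual isotropy group $H_p$ of positive dimension, and that their count really coincides with the cohomogeneity. This is where the paper's standing restriction to $\mathcal{I}$-non-degenerate spacetimes does the work: there the canonical frame is pinned down sufficiently that the Cartan invariants carry the same functional content as the SPIs, so the invariance and spanning statements hold exactly as for the $S^{(i)}$ and the wedge-product computation goes through verbatim.
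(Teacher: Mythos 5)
Your proposal is correct and takes essentially the same route as the paper: the paper's proof likewise consists of identifying the cohomogeneity $n$ with the function count $t_p$ of the Cartan--Karlhede algorithm (via $r = D - t_p + \dim(H_p)$ and orbit dimension $r - \dim(H_p) = D - t_p$) and then carrying over Page and Shoom's argument with the $C^{(i)}$ in place of the $S^{(i)}$, which you spell out (Killing-invariance, spanning of the orbit conormal space, Gram determinant) where the paper simply cites it. The only quibble is your closing appeal to $\mathcal{I}$-non-degeneracy, which is not needed: Cartan invariants in canonical form are isometry-invariant scalar functions by construction of the equivalence method, even when $\dim(H_p) > 0$.
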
 

\begin{proof}
The number of functionally independent invariants at the end of the
algorithm, $t_p$,  is directly related to the dimension of the local
cohomogeneity. To see this, we note that the dimension of the isometry
group is given by $r = D-t_p + dim(H_p)$ where $H_p$ is the dimension
of the isotropy group of the curvature tensor and all its covariant derivatives. However, the maximal dimensional orbits of the isometry group will be given by $r-dim(H_p)$ = $D-t_p$, since this is the quotient of the Lie group of Killing vectors by the isotropy group, and therefore $n = D - r + dim(H_p) = t_p$. Using $n$ functionally independent Cartan invariants, the proof carries forward in a similar manner to the proof of theorem  \ref{PSthrm} in \cite{PageShoom2015}. 
\end{proof}

\noindent Alternatively, we can use the first order Cartan invariants (those arising from the covariant derivative of the Riemann tensor) to produce new invariants that detect the stationary horizons. These invariants will be much simpler than the SPIs.

\subsection{The Cartan-Karlhede Algorithm in Five Dimensions} \label{CKalg5D}

We would like to apply the Cartan-Karlhede algorithm to determine a set of Cartan invariants which detect the stationary horizon for the 5D black hole metrics. This can be achieved in arbitrary dimension by examining the $q$th covariant derivative of the Weyl and Ricci tensor at iteration, $q$, and using the frame transformations to transform the $q$th covariant derivative of the Weyl tensor and Ricci tensor into some canonical form, if possible.

In 5D, relative to the half-null frame  with $n_al^a=1$, $n_an^a=l_al^a=0$ and $m^{(i)}_am^{(j)}_b=\delta^{ij}$ in terms of which the metric can be written as $g_{ab}=2 l_{(a}n_{b)}+\delta^j_im^{(i)}_am^{(j)}_b$, the local Lorentz transformations are generated by combining the following frame transformations \citep{ref3,ref4}:

\begin{subequations}\label{Lorentz5D}
\begin{enumerate}
  \item Null rotations about $l$:
  \beq
 \hat l= l, \quad \hat n= n+z_i m^i-\frac{1}{2} z_i z^i l, \quad \hat m_i= m_i-z_i l 
\eeq
  \item Null rotations about $n$:
  \beq
\hat l= l+y_i m^i-\frac{1}{2} y_i y^i n, \quad \hat n= n, \quad \hat m_i= m_i-y_i n
\eeq
  \item Spins:
\beq
\hat l=l , \quad \hat n=n , \quad \hat m_i =X_i^j m_j
\eeq
  \item Boost:
\beq
\label{boost} 
\hat l= \lambda l , \quad \hat n=\lambda^{-1} n , \quad \hat m_i = m_i\,,
\eeq

\end{enumerate}
\end{subequations}

\noindent where $X_i^j$ denotes the usual rotation matrices 
for rotations about the axes $m_2$, $m_3$, $m_4$ respectively. We stress that the quantities $z_i=z_i(x^a)$, $\theta=\theta(x^a)$ and $\lambda=\lambda(x^a)$ 
depend on the coordinates. We also note that the Lorentz transformations in 5D have 10 parameters.

For dimension $D > 4$, we no longer have the usual spinor approach to simplify calculations, and the
4D algebraic classifications of the Weyl and Ricci tensors are no longer applicable.
Instead, we consider the boost weight decomposition \citep{CH2010,CH2011,CHOL2012} to classify the curvature tensor. Relative to the basis $\{ \theta^a\} = \{ n, \ell, m^i\}$, the components of an
arbitrary tensor of rank $p$ transform under the
boost \eqref{boost} by:
\beq T'_{a_1 a_2...a_p} = \lambda^{b_{a_1 a_2 ... a_p}} T_{a_1 a_2 ... a_p},~~ b_{a_1 a_2...a_p} = \sum_{i=1}^p(\delta_{a_i 0} - \delta_{a_i 1})  \eeq
\noindent where $\delta_{ab}$ denotes the Kronecker delta symbol. This
quantity is called the {\it boost weight} (b.w) of the frame component
$T_{a_1 a_2 ... a_p}$. This approach, called the {\it alignment
  classification}, relies on the fact that the frame basis written as
a null basis transforms in a simple manner under a boost given by
(\ref{boost}) and that this identifies null directions relative to
which the Weyl tensor has components of a particular
b.w. configuration, called Weyl aligned null directions
(WANDs). Typically, we must use null rotations to identify the WANDs for a given tensor.

We define the {\em boost order} of $T_{a_1 a_2 ... a_p}$ as the maximal b.w. of its non-vanishing components relative to the frame. As this integer is invariant under the group of Lorentz transformations that fix the null direction $[\ell]$, it is a function of $[\ell]$ only, and will be denoted by $b_T([\ell])$. We introduce another integer, $B_T = \max_{\ell} b_T([\ell])$, which is entirely dependent on the form of the tensor. For a generic $\ell$ the Weyl and Ricci tensors have boost order $b_R([\ell]) = b_C([\ell]) = 2$, and so $B_R=B_C = 2$. If a null direction $[\ell]$ exists for which $b_T([\ell]) \leq B_T-1$, it is said to be a {\em $T$ aligned null direction} of {\em alignment order}: $B_T-1-b_T([\ell])$. As an example, for a WAND, the alignment order can be $0,1,2,3$. The alignment order can be related to another integer invariant, 
\beq \zeta \equiv \min_{\ell} b_C([\ell]), \nonumber \eeq
\noindent which is a pointwise invariant of the spacetime defining the 
(Weyl) primary or principal alignment type $2-\zeta$ at p. If $\zeta = 2,1,0,1$ or $-2$ this type is denoted by {\bf G}, {\bf I}, {\bf II}, {\bf III} or {\bf N} respectively. If there is more than one WAND in the type {\bf II} case, then this is denoted as {\bf D}. This classification can also be applied to the Ricci tensor since $B_C = B_R = 2$. 

This classification reproduces the Petrov and Segre classifications in 4D, and also leads to a coarse classification in higher dimensions; for example, an algebraic classification of all higher dimensional Kundt metrics \citep{PS2013}. In 5D this classification can be made finer by considering the spin group which is isomorphic to $O(3)$ and acts on the null frame according to (\ref{Lorentz5D}). The details of this approach are expanded upon in \cite{CHOL2012}. There is a fundamental difficulty with applying the alignment classification, as it relies on solving degree four multivariate polynomials, with more than two variables, to determine the WANDs. The exact solutions to such polynomial equations are difficult to compute and hence the ability to determine the WANDs in dimension higher than four is not always feasible in practice. Assuming a theory of approximate equivalence could be developed, numerical root solving could be implemented to resolve this issue.

\section{Applications in 4D}\label{WE}

In this section we apply the Cartan equivalence method for the classification of 4D solutions of the Einstein equations describing stationary, asymptotically flat (or (anti) de Sitter) black holes to compute Cartan invariants that are capable of identifying the horizons which correspond to the positive b.w. components of the covariant derivatives of the Weyl and Ricci tensor; this follows in 4D from the fact that a Killing horizon is a special case of a weakly isolated horizon  \citep{ADA2017,AD2017}. We will then relate these Cartan invariants to the SPIs using the Newman-Penrose (NP) formalism. Due to the relationship between the SPIs $I_1$ and $I_2$ and the Cartan invariant $\Psi_2$ each of the SPIs generated by Theorem \ref{PSthrm} may be considered as extended Cartan invariants produced by Theorem \ref{DPSthrm}. 

We note that some of the examples we consider here will be contained as special cases in others. For example, the Kerr solution is a special case of the Kerr-NUT-(Anti)-de Sitter solution and the Kerr-Newman solution,  and similarly the Reissner-Nordstr{\"o}m solution is contained in the Kerr-Newman solution. Our intention in giving these as separate cases is to provide examples in different coordinate systems and illustrate the structure of the Cartan invariants.

\subsection{Kerr metric}

The 4D Kerr metric in Boyer-Lindquist coordinates is given by the line element 
\small
\beq \begin{aligned} 
ds^2 &=& -\frac{Q-a^2 \sin^2 \theta}{R^2} dt^2 - \frac{2a \sin^2 \theta (r^2 + a^2 - Q) }{R^2} dt d \phi + \frac{((r^2+a^2)^2 - Q a^2 \sin^2 \theta) \sin^2 \theta}{R^2} d\phi^2 + \frac{R^2}{Q} dr^2 + R^2 d\theta^2 ,
\end{aligned}  \label{KerrMtrc} \eeq
\beq Q(r) = r^2 + a^2 -2Mr,~~ R(r,\theta) = \sqrt{r^2 + a^2 \cos^2 \theta}. & \label{KerrFns} \eeq
\normalsize

To start the Cartan-Karlhede algorithm, we employ the following null coframe:
\begin{eqnarray}
\ell&=&dt+\left(\frac{R^2}{Q}\right)dr+a \sin^2\theta d\phi, \\
n&=&\left(\frac{Q}{2R^2}\right)dt-\frac{1}{2}dr+\left(\frac{a \sin^2\theta Q }{2R^2}\right)d\phi, \\
m&=&\left(-\frac{i\sqrt{2}a \sin\theta}{2\left(r+i a \cos\theta\right)}\right)dt+\left(\frac{\sqrt{2} R^2 }{2\left(r+i a \cos\theta\right)}\right)d\theta+\left(\frac{i\sqrt{2}(a^2+r^2)\sin\theta}{2\left(r+i a \cos\theta\right)}\right)d\phi, \\
\overline{m}&=&\left(-\frac{i\sqrt{2}a \sin\theta}{2\left(i a \cos\theta-r\right)}\right)dt-\left(\frac{\sqrt{2} R^2}{2\left(i a \cos\theta-r\right)}\right)d\theta+\left(\frac{i\sqrt{2}(a^2+r^2)\sin\theta}{2\left(i a \cos\theta-r\right)}\right)d\phi.
\end{eqnarray}

\noindent To calculate the Riemann tensor at zeroth order we will use the NP formalism. We find the only non-vanishing curvature scalar at zeroth order to be:
\beq
\Psi_2=\frac{iM}{\left(a \cos\theta+i r\right)^3}.
\eeq
Any null rotation will ruin the form of the Riemann tensor, while spins and boosts do not effect $\Psi_2$, and thus leave the Weyl tensor invariant. The dimension of the isotropy group has been reduced from six to two at zeroth order. From $\Psi_2$, we obtain an even simpler invariant $ C_0=\left(i \frac{1}{\Psi_2}\right)^{\frac{1}{3}};$ the real and imaginary parts define two functionally independent zeroth order invariants:
\beq Re(C_0)=\frac{a \cos\theta}{M^{\frac{1}{3}}},\hspace{3mm} Im(C_0)=\frac{r}{M^{\frac{1}{3}}}.\eeq
The zeroth iteration of the Cartan-Karlhede algorithm concludes with $\dim(H_0)=2$, $t_0=2$.

To begin the first iteration of the Cartan-Karlhede algorithm, we compute the first covariant derivative of the Weyl tensor (or the Weyl spinor here, as we have chosen to work with the spinor formalism \citep{Stephani}). From the symmetrized first covariant derivative of the Weyl spinor has the following components:
\beq 
\hspace*{-2em}\nabla \Psi_{20'} = 3(D \Psi_2 +2\rho\Psi_2)/5,~
\nabla \Psi_{21'} = 3(\delta\Psi_2+2 \tau \Psi_2)/5 ,~
\nabla \Psi_{30'} = 3(\bar{\delta} \Psi_2 -2\pi\Psi_2)/5,~
\nabla \Psi_{31'} = 3(\Delta\Psi_2 -2 \mu \Psi_2)/5.
\label{NablaPsi0}
  \eeq
Since we have that $\rho$, $\mu$, $\tau$, and $\pi$ are all non-vanishing, the Kerr metric belongs to the family of type D vacuum spacetimes in the third case identified by \cite{ref1}; we thus may fix the boost and spin to some desired value. We employ the canonical choice for such a Petrov type D metric, $\nabla\Psi_{31'}=-\nabla\Psi_{20'}$, implying that $\rho = \mu$ and additionally impose that $\tau = \pi$ using the remaining spin:

\beq & \rho = \mu = \frac{i\sqrt{Q}}{\sqrt{2} R (ir + a \cos \theta) }, & \label{rhoKerr} \\
& \tau = \pi = \frac{-i a \sin\theta }{\sqrt{2} R (ir+a\cos\theta)}. & \label{tauKerr} \eeq

\noindent No new functionally independent invariants have been introduced at this iteration. The first iteration of the Cartan-Karlhede algorithm therefore concludes with $\dim(H_1)=0$ and $t_1=2$. Although $t_0=t_1=2$, we have that $2=\dim(H_0)\neq\dim(H_1)=0$ and so we must continue the algorithm.

The second iteration of the Cartan-Karlhede algorithm begins by computing the second covariant derivative of the Weyl spinor.  While this can be computed in compact form using the GHP formalism and the formulae $(4.3a')-(4.3i')$ of \citep{ref1}, we will omit the details. No functionally independent invariants appear from the second covariant derivative. Thus, $t_1=t_2=2$ and $\dim(H_1)=\dim(H_2)=0$; the Cartan-Karlhede algorithm terminates after the second iteration, in agreement with \cite{Aman1984}. In the subsequent examples, we will omit  details of the final iteration of the Cartan-Karlhede algorithm for brevity.

Using the Cartan invariants we can construct scalar invariants that can be used to detect both the horizon and the ergosurface. 
For vacuum Petrov type D metrics the Bianchi identities give
\beq &D \Psi_2 = 3 \rho \Psi_2,~~ \Delta \Psi_2 = - 3 \mu \Psi_2 , & \nonumber \\ 
& \delta \Psi_2 = 3\tau \Psi_2,~~\bar{\delta}
\Psi_2 = -3\pi \Psi_2. & \label{BianchiVacuum} \eeq
\noindent Applying the Bianchi identities to \eqref{NablaPsi0}, the extended Cartan invariant 
\beq \frac{\nabla \Psi_{20'}}{\Psi_2} \propto \rho, \eeq
\noindent will vanish on the horizon, and nowhere else due to its coordinate expression \eqref{rhoKerr}. This invariant is relevant due to its geometric meaning as noted by \cite{MacCallum2006}, but it is not unique, since one could use another combination of Cartan invariants up to first order. Noting that $Q_1$ detects the ergosurface, we would like a corresponding Cartan invariant that will do so. Consider the following extended Cartan invariant:

\beq \rho^2 - \tau^2 = \frac{-(Q - a^2 \sin^2 \theta) }{2 R^2 (ir + a\cos \theta)^2}, \label{ergokerr} \eeq

\noindent comparing with the $g_{tt}$ component of \eqref{KerrMtrc} shows that this will detect the ergosurface.

It is important to stress that this approach requires a particular invariantly defined choice of coframe, and that $\rho$ and $\rho^2 - \tau^2$ can be regarded as invariants only in the form  they take relative to the canonical frame.  However, it is possible to implement Theorem \ref{DPSthrm} to generate an extended Cartan invariant that detects the event horizon and make the choice of frame irrelevant. Working with $\Psi_2$ and its complex conjugate we find the following invariant:

\beq ||d\Psi_2 \wedge d\bar{\Psi}_2 ||^2 = \frac{-3^4 2^3  Q^2 a^2 M^4 \sin^2 \theta}{R^{20}}. \label{OurQ2Kerr} \eeq


%
%
%

\subsection{Reissner-Nordstr\"{o}m-(Anti)-de Sitter metric}

In this section we consider the Cartan invariants arising from the Cartan-Karlhede algorithm for the 4D Reissner-Nordstr\"{o}m-(Anti)-de Sitter metric describing a static but electrically charged black hole in presence of a cosmological constant \citep{Stephani}. In the system of coordinates ($t$, $r$, $\theta$, $\phi$) the metric is 

\beq
ds^2=-f(r)dt^2+\frac{dr^2}{f(r)}+r^2(d\theta^2+\sin^2\theta d\phi^2)\,
\eeq
where $f(r) = \left(1-\frac{2M}{r}+\frac{q^2}{r^2}-\frac{\Lambda r^2  }{ 3 }\right)$, and $M$ and $q$ denote respectively the mass and the electric charge of the black hole and $\Lambda$ the cosmological constant. 

To begin the Cartan-Karlhede algorithm for the Reissner-Nordstr{\"o}m-(anti)-de Sitter metric, we introduce the orthonormal frame:
\beq
e_{ 0 }=\frac{1  }{\sqrt{1-\frac{2M}{r}+\frac{q^2}{r^2} -\frac{\Lambda r^2  }{ 3 }  } }\partial_t, \quad
e_{ 1 }=\sqrt{1-\frac{2M}{r}+\frac{q^2}{r^2}-\frac{\Lambda r^2  }{ 3 }} \partial_r, \quad e_{ 3 }=\frac{1  }{  r  }\partial_\theta, \quad
e_{4 }=\frac{1  }{r\sin \theta   }\partial_\phi, 
\eeq
in terms of which we construct the  null tetrad:
\beq
l=\frac{ 1 }{ \sqrt{2  } }(e_{1  }+e_{2 }), \quad n=\frac{ 1 }{ \sqrt{2  } }(e_{ 1 }-e_{ 2 }), \quad m=\frac{ 1 }{ \sqrt{2  } }(e_{ 3 }+i e_{ 4}), \quad {\bar m  }=\frac{ 1 }{ \sqrt{2  } }(e_{3 }-i e_{4})\,.
\eeq

\noindent The nonzero curvature scalars are
\beq
\Phi_{ 11 }=\frac{ q^2 }{2r^4  }, \quad \Psi_2=\frac{ q^2-Mr }{ r^4 }, \quad \Lambda_{NP}=\frac{  \Lambda}{  6},
\eeq
\noindent where $R$ is the Ricci scalar.

Calculating the first covariant derivative of the Weyl and Ricci spinors are given by \eqref{NablaPsi0} and 
\begin{eqnarray}
\begin{aligned} 
\nabla\Phi_{11'} = 4(D\Phi_{11}+(\rho+\bar{\rho})\Phi_{11})/9,\\
\nabla\Phi_{12'} = 4(\delta\Phi_{11}+(\tau-\bar{\pi})\Phi_{11})/9,\\
\nabla\Phi_{22'} = 4(\Delta\Phi_{11}-(\mu+\bar{\mu})\Phi_{11})/9,
\end{aligned} \label{NablaPhi}
\end{eqnarray} 
\noindent where 
\beq & \rho = \mu = -\frac{1}{ \sqrt{2} r } \left(1-\frac{2M}{r}+\frac{q^2}{r^2}-\frac{\Lambda r^2  }{ 3 }\right)^\frac12, & \label{rhoRNADs} \\
& \tau = \pi = 0. \label{tauRNADs} \eeq

\noindent At first order we still have one functionally independent component and the boost is no longer in the isotropy group; thus $t_1=1$ and dim($H_1$)=1. The nonzero components are expressed in terms of the frame derivatives of $\Psi_2$, the nonzero Ricci spinor components, and the spin-coefficients $\mu, \rho, \pi$ and $\tau$. As the Cartan-Karlhede algorithm stops at second order \citep{Aman1984}, we can now refer to these quantities as Cartan invariants. 

For Petrov type D metrics in which $\Phi_{11}$ is the only nonzero matter term, the Bianchi identities give
\beq &D \Psi_2 = 3 \rho \Psi_2 + 2 \rho \Phi_{11},~~ \Delta \Psi_2 = - 3 \mu \Psi_2 - 2 \mu \Phi_{11}, & \nonumber \\ 
& \delta \Psi_2 = 3\tau \Psi_2 - 2 \tau \Phi_{11},~~\bar{\delta}
\Psi_2 = -3\pi \Psi_2 + 2 \pi \Phi_{11}, & \label{Bianchi} \\
& D\Phi_{11} = 2(\rho + \bar{\rho}) \Phi_{11},~~ \delta \Phi_{11} = 2(\tau - \bar{\pi}) \Phi_{11},~~ \Delta \Phi_{11} = -2(\mu + \bar{\mu}) \Phi_{11}.& \nonumber \eeq
\noindent Applying the Bianchi identities to \eqref{NablaPsi0}, we find that the extended Cartan invariant 
\beq \frac{\nabla \Psi_{20'}}{\Psi_2} \propto \rho, \eeq
\noindent will vanish on the horizon, and nowhere else, due to \eqref{rhoKerr}.

Previously it was noted that $I_3 = R^{abcd;e} R_{abcd;e}$ detects the horizon for the Reissner-Nordstr\"{o}m solution \citep{KLA1982}, which can be seen by direct calculation: 
\beq 
&&I_3=R_{ abcd;e }R^{abcd ;e }= -\frac{16 (15M^2r^2-36Mq^2r+22q^4) f(r) }{r^{ 10 } }. \eeq
\noindent Alternatively, we can apply Theorem \ref{DPSthrm} to generate an extended Cartan invariant that will detect the horizon. As the cohomogeneity of this solution is $n=1$, we may consider the norm of the exterior derivative of $\Psi_2$:

\beq || d \Psi_2 ||^2 = \frac{(3Mr-4q^2)^2 f(r)}{r^{10}}. \eeq

\subsection{Kerr-Newman metric}
It is worthwhile to ask if the invariants $Q_1, Q_2$ and $Q_3$ from (\ref{Q}) detect the event horizon and ergosurface for more general type D metrics, as for a non-vacuum solution. To check this we consider the Kerr-Newman solution given by the line element:

\beq  ds^2 = \frac{Q}{R^2} (dt - a\sin^2(\theta) d \phi)^2 - \frac{R^2}{Q}dr^2 -R^2 d\theta^2 - \frac{(r^2 +a^2)^2 \sin^2 (\theta)}{R^2} \left(d\phi - \frac{a}{r^2+a^2} dt \right)^2 \label{KerrNwmnMetric} \eeq
 where $\Lambda$ denotes the cosmological constant and $q$ denotes the electric charge, and the functions $Q$ and $R$ are:
\beq Q = r^2 - 2Mr + q^2 + a^2, \qquad  R = \sqrt{r^2 + a^2 \cos^2(\theta)}.  \eeq

\noindent The location of event horizons may be calculated from the zeros of $Q$. In the limit $q\to0$ we recover the line element for the Kerr metric.  

The expressions for the invariants $Q_1, Q_2$ and $Q_3$ are considerably larger and we will show in section \ref{SPIalaCartan} that $Q_1$   no longer detects the ergosurface for the Kerr Newman solution. While it is still possible to apply Theorem \ref{PSthrm} from \cite{PageShoom2015} to generate a SPI that will detect the horizon, relative to the coordinates, the calculation of this invariant will be lengthy.

We would like to see if a simpler invariant can be built out of first order Cartan invariants.  We begin the Cartan-Karlhede algorithm by computing the Weyl and Ricci spinors using the NP formalism. We define our orthonormal coframe:
\beq & e_0 = \frac{\sqrt{Q}}{R} dt - \frac{\sqrt{Q} a \sin (\theta) }{R} d\phi  , \quad e_1 = \frac{R}{\sqrt{Q}} dr  ,\quad e_2 = R d \theta , \quad   e_3 = \frac{(x^2+a^2) \sin (\theta)}{R} d \phi - \frac{a \sin (\theta)}{R} dt, & \eeq
in terms of which the full tetrad reads:
\beq
\ell=\frac{1}{\sqrt{2}}(e_0-e_1), \quad  n=\frac{1}{\sqrt{2}}(e_0+e_1), \quad m=\frac{1}{\sqrt{2}}(e_2+i e_3), \quad \overline{m}=\frac{1}{\sqrt{2}}(e_2-i e_3). \eeq

\noindent The nonzero curvature scalars are:
\beq
\Phi_{11}= \frac{q^2}{2 \left(r^2+a^2 \cos^2\theta\right)^2}, \quad
\Psi_2=\frac{i \left(Ma \cos\theta-i Mr+i q^2\right)}{\left(a \cos\theta-i r\right)\left(a \cos\theta+i r\right)^3}.
\eeq
Using $\Phi_{11}$ and (the magnitude of) $\Psi_2$, we construct the functionally independent invariants:
\beq
C_0=\frac{M^2a^2 \cos^2\theta+M^2r^2-2Mrq^2+q^4}{q^4}, \quad
C_1=\frac{r^2+a^2 \cos^2\theta}{q}.
\eeq
Thus, $t_0=2$. In addition, null rotations alter the form of $\Psi_2$ and $\Phi_{11}$, while boosts and spins leave both unchanged. We have therefore reduced the isotropy group from six dimensions to two at zeroth order, i.e. $\dim(H_0)=2$. 

We can now proceed to the first iteration of the Cartan-Karlhede algorithm, by calculating the first covariant derivative of the Weyl and Ricci spinors \citep{ref1,ref2a}, these are given by \eqref{NablaPsi0} and \eqref{NablaPhi}
\noindent where we may apply a boost and spin to set:
\beq & \rho = \mu = \frac{-i \sqrt{Q}}{\sqrt{2} R (ir + a\cos \theta)}, & \label{rhoKerrNewman} \\
& \tau = \pi = \frac{-i a \sin \theta }{\sqrt{2} R (ir + a\cos \theta)}. & \label{tauKerrNewman}\eeq

No new functionally independent Cartan invariants appear at first order. The remaining isotropy freedom is used up at first order by fixing both boosts and spins to be identity. It is known already \citep{Aman1984} that the Cartan-Karlhede algorithm concludes at the second iteration, since no new functionally independent invariants appear; $t_1 = t_2 = 2$ and dim $(H_1)$ = dim $(H_2)$ = 0. 

As an aside, we use components of the Weyl spinor and its first covariant derivative to construct the Cartan invariants that will detect the horizon and ergosurface. Applying the Bianchi identities \eqref{Bianchi} to \eqref{NablaPsi0}, the same extended Cartan invariant, 
\beq \frac{\nabla \Psi_{20'}}{\Psi_2} \propto \rho, \eeq
\noindent will vanish on the horizon, and nowhere else due to \eqref{rhoKerr}. Unlike $Q_1$ the extended Cartan invariant that detects the ergosurface is applicable to the Kerr-Newman solution since

\beq \rho^2 - \tau^2 = \frac{-(Q - a^2 \sin^2 \theta) }{2R^2 (ir + a\cos \theta)^2} \eeq
\noindent and comparing with the $g_{tt}$ component of \eqref{KerrNwmnMetric} shows that this will detect the ergosurface.

As an alternative, we apply Theorem \ref{DPSthrm} using the zeroth order Cartan invariants $\Psi_2$ and $\bar{\Psi}_2$, which gives the following extended Cartan invariant that will detect the horizon: 
\beq W =  || d \Psi_2 \wedge d \bar{\Psi}_2||^2 = \frac{- a^2 Q \sin^2(\theta)(9M^2 \cos^2 (\theta) a^2 + 9 r^2 M^2 -18rMq^2+8q^4)^2}{2(r^2+a^2\cos^2(\theta))^{12}} . \eeq

\subsection{Kerr-NUT-(Anti)-de Sitter metric}

The 4D Kerr-NUT-AdS metric is given by the line element \citep{PlebDem1976,Stephani,Griffiths2007}
\begin{eqnarray}
ds^2&=&\left(\frac{P-Q}{p^2+q^2}\right)dt^2+\left(\frac{Q  p^2+P  q^2}{p^2+q^2}\right)\left(dt\otimes dr+dr\otimes dt\right) \nonumber\\
&+&\left(\frac{P  q^4-Q  p^4}{p^2+q^2}\right)dr^2 +\left(\frac{p^2+q^2}{P}\right)dp^2+\left(\frac{p^2+q^2}{Q}\right)dq^2,
\end{eqnarray}
where $P\equiv P(p)$ and $Q\equiv Q(q)$ are fourth-degree polynomials in p and q, containing the parameters   $a, l, m$ and $\Lambda$:

\begin{eqnarray}
P&=&(a^2-(p-l)^2)\left(1+\frac{1}{3}(p-l)(p+3l)\Lambda\right), \\
Q&=&a^2-l^2-2mq+q^2-\frac{1}{3}\left[3l^2(a^2-l^2)+(a^2+6l^2)q^2+q^4\right]\Lambda.
\end{eqnarray}
We note that we have chosen $a_0 = 1$ \citep{Griffiths2007}, but there are other choices for the coefficients that will provide simpler expressions for $P$ and $Q$. The locations of the event horizon for this solution are denoted by the roots of $Q(q)$. 
To see if the invariants $Q_1, Q_2$ and $Q_3$ detect the horizons, one could pick particular values for $a,l,m$, and $\Lambda$ to determine the roots of $Q$ and test to see if the  invariants share these roots. The expressions for the $Q$ invariants are very large polynomials in $p$ and $q$, and it is not clear that they can be factorized into irreducible polynomials. Cartan invariants consequently allow for the construction of simpler candidates for detection of the horizon.

We define our null frame:
\begin{eqnarray}
\ell&=&\frac{\sqrt{2}}{2}\sqrt{\frac{Q}{p^2+q^2}}dt-\frac{\sqrt{2}}{2}  p^2\sqrt{\frac{Q}{p^2+q^2}} dr-\frac{\sqrt{2}}{2}\sqrt{\frac{p^2+q^2}{Q}}dq, \\
n&=&\frac{\sqrt{2}}{2}\sqrt{\frac{Q}{p^2+q^2}}dt-\frac{\sqrt{2}}{2}  p^2\sqrt{\frac{Q}{p^2+q^2}} dr+\frac{\sqrt{2}}{2}\sqrt{\frac{p^2+q^2}{Q}}dq, \\
m&=&\frac{\sqrt{2}}{2}\sqrt{\frac{P}{p^2+q^2}}dt+\frac{\sqrt{2}}{2}  q^2\sqrt{\frac{P}{p^2+q^2}} dr-\frac{i\sqrt{2}}{2}\sqrt{\frac{p^2+q^2}{P}}dp, \\
\overline{m}&=&\frac{\sqrt{2}}{2}\sqrt{\frac{P}{p^2+q^2}}dt+\frac{\sqrt{2}}{2}  q^2\sqrt{\frac{P}{p^2+q^2}} dr+\frac{i\sqrt{2}}{2}\sqrt{\frac{p^2+q^2}{P}}dp.
\end{eqnarray}
The only nonzero NP curvature scalars are $\Lambda$ and
\beq
\Psi_{2}= \frac13 \frac{\Lambda a^2 l - 4\Lambda l^3 + 3 i m + 3l}{(p+iq)^3}. 
\eeq
At zeroth order of the Cartan-Karlhede algorithm, we obtain as our Cartan invariants the real and imaginary parts of $\Psi_2$, which are functionally independent, and so $t_0 =2$. The zeroth order isotropy group consists of boost and spins, and so dim $(H_0)$ = 2. 
At the first iteration of the algorithm the components of the covariant derivative of the Weyl spinor are \eqref{NablaPsi0}.  Relative to this coordinate system we have 

\beq & \rho = \mu = - \frac{\sqrt{Q}(iq-p) }{\sqrt{2}(p^2+q^2)^\frac32}, & \label{rhoKNUTAds} \\ 
& \tau = \pi = - \frac{\sqrt{P}(iq-p) }{\sqrt{2}(p^2+q^2)^\frac32}. & \label{tauKNUTAds} \eeq
\noindent and so the boosts and spins have already been fixed to the canonical form, implying dim $(H_1)$ = 0. No new functionally independent invariants appear at first order, so that $t_1 = 2$. It is known already \citep{Aman1984} that the Cartan-Karlhede algorithm concludes at the second iteration, since no new functionally independent invariants appear $t_1 = t_2 = 2$ and dim $(H_1)$ = dim $(H_2)$ = 0.

We would like to compute an extended Cartan invariant that detects the event horizon. As before applying the Bianchi identities to \eqref{NablaPsi0} gives the usual extended Cartan invariant   
\beq \frac{\nabla \Psi_{20'}}{\Psi_2} \propto \rho. \eeq
\noindent Computing the roots of $Q(q)$ for arbitrary $a,l,m$ and $\Lambda$ is not a pleasant task. However, for this extended Cartan invariant we do not need to compute them, as it is clear that the zeros of $\rho$ are exactly the zeros of $Q(q)$. As in the Kerr case, the following extended Cartan invariant

\beq \rho^2 - \tau^2 = \frac{ (Q - P) (iq-p)^2}{2 (p^2 + q^2)^3}, \eeq
\noindent will detect the ergosurface.

Of course, it is possible to implement Theorem \eqref{DPSthrm} to generate an extended Cartan invariant that detects the event horizon, and make the choice of frame irrelevant. Working with the real and imaginary of $\Psi_2$ we find the following invariant:

\beq |d \Psi_2 \wedge d \bar{\Psi}_2|^2 =    -\frac{3^4 Q P|\Psi_2|^2}{2(q^2 + p^2)^{4}}.  \eeq

\noindent  This invariant detects the horizons; however, it also vanishes at $p = l \pm a$ and $p = - \frac{-\Lambda l \pm \sqrt{ 4\Lambda^2 l^2 -3 \Lambda}}{\Lambda}$ as well.

\section{Scalar Polynomial Invariants in Terms of Cartan Invariants in 4D} \label{SPIalaCartan}

For the stationary, asymptotically flat (or (anti) de Sitter) 4D black holes we have considered, the SPIs, $I_1,..., I_7$ from (\ref{IInvariant}) may be expressed in terms of the zeroth order Cartan invariants: 
\beq \Psi_2,~~\bar{\Psi}_2,~~ \Phi_{11}, \nonumber \eeq
\noindent and the nonzero first order extended Cartan invariants: 
\beq & D \Psi_2,~~\Delta \Psi_2,~~\delta \Psi_2,~~ \bar{\delta} \Psi_2,D \Phi_{11},~~\Delta \Phi_{11},~~\delta \Phi_{11},~~ \bar{\delta} \Phi_{11},~~ \rho,~~\pi,~~\tau,~~\mu. & \nonumber \eeq
\noindent This follows by computing the SPIs relative to the coframe determined by the Cartan-Karlhede algorithm. To relate the SPIs to the Cartan invariants we first note that \citep{AbdelqaderLake2015}
$$ I_1+iI_2 = 48 {\Psi_2}^2.$$ 
Since $\Psi_2$ is therefore expressible in terms of SPIs, its gradient $\nabla\Psi_2$
(which for scalars is the same as a covariant derivative) can also be
used to form SPIs. From the definitions \eqref{IInvariant}, it is
immediately obvious that $I_5$, $I_6$ and $I_7$ can be expressed using
$\Psi_2$ and $\nabla\Psi_2$ and their complex conjugates, and the same
follows for $I_3$ and $I_4$ using Page and Shoom's equation (9). Writing
$\nabla A.\nabla B$ for $A_{,\mu}B^{,\mu}$, we find that
\begin{equation}
(96\Psi_2)^2 (\nabla\Psi_2 . \nabla\Psi_2) =
12 \cdot 48(\Psi_2)^2(I_3+iI_4)/5,
\label{LHSPSId}
\end{equation}

\noindent so $I_3$ and $I_4$ are the real and imaginary parts of
$160(\nabla\Psi_2 . \nabla\Psi_2)$. We also find
\begin{eqnarray}
I_5&=& (96)^2[(\Psi_2)^2 (\nabla\Psi_2 . \nabla\Psi_2) + cc
    +2 \Psi_2\bar{\Psi}_2(\nabla\Psi_2 . \nabla\bar\Psi_2)]/4,\\
I_6&=& (96)^2 [-(\Psi_2)^2 (\nabla\Psi_2 . \nabla\Psi_2)
      -cc
    +2 \Psi_2\bar{\Psi}_2(\nabla\Psi_2 . \nabla\bar\Psi_2)]/4,\\
I_7 &=& (96)^2 [(\Psi_2)^2 (\nabla\Psi_2 . \nabla\Psi_2) - cc]/4,
\end{eqnarray}
\noindent where cc means the complex conjugate of the preceding expression. Using the Bianchi identities, these expressions may be simplified

We can now easily compute the $Q_i$ which are
\begin{eqnarray}
Q_1 &=& \frac{  2 {\cal {R}} [({\bar{\Psi}_2}^2(\nabla\Psi_2.\nabla\Psi_2)]}
        {9(\Psi_2\bar{\Psi}_2)^{5/2}},\label{NoteQ1}\\
Q_2 &=&\frac{ -2 ||\nabla{\bar{\Psi}_2}\wedge\nabla\Psi_2||^2}
       {18^2(\Psi_2\bar{\Psi}_2)^3},\\
Q_3 &=& \frac{\nabla\Psi_2 .\nabla\bar{\Psi}_2}
         {18 (\Psi_2\bar{\Psi}_2)^{3/2}},
\end{eqnarray}
where ${\cal R}$ denotes the real part.  We note that while the original formula \eqref{Q} for $Q_2$ is more complicated, it is in fact a dimensionless version of our proposed invariant $W = ||d\Psi_2 \wedge d\bar{\Psi}_2||^2.$

\begin{itemize}

\item {\bf Kerr-NUT-(Anti)-de Sitter Metric} 

For the Kerr-NUT-(Anti)-de Sitter metric we have, using \eqref{BianchiVacuum} and evaluating in the canonical frame:
\begin{eqnarray}
\nabla\Psi_2 . \nabla\Psi_2&=& 18{\Psi_2}^2(\rho^2 - \tau^2),\label{DPDP}\\
\nabla \Psi_2 .\nabla\bar{\Psi}_2 &=& 18\Psi_2\bar{\Psi}_2(|\rho|^2+|\tau|^2).\label{DPDPbar}
\end{eqnarray}

Therefore, the invariants $Q_1, Q_2$ and $Q_3$ take the form:
\begin{eqnarray}
Q_1&=& \frac{ (\rho^2-\tau^2)+cc}{(\Psi_2\bar{\Psi}_2)^{1/2}}, \\
Q_2&=& \frac{2(\rho \bar{\tau} + \bar{\rho} \tau)^2 }{|\Psi_2|^2},\\
Q_3 &=& \frac{ (|\rho|^2 + |\tau|^2) }{|\Psi_2| }.
\end{eqnarray}

\item {\bf Kerr-Newman Metric}

For Kerr-Newman, using \eqref{Bianchi} and evaluating in the canonical frame we find:
\begin{eqnarray}
\nabla\Psi_2 . \nabla\Psi_2&=&  8( \rho^2 - \tau^2 ) \Phi_{11}^2 + 24(\tau^2 + \rho^2) \Psi_2 \Phi_{11} + 18{\Psi_2}^2(\rho^2 - \tau^2),\label{DPDPKN}\\
\nabla \Psi_2 .\nabla\bar{\Psi}_2 &=& 8(|\rho|^2 + |\tau|^2) \Phi_{11}^2 + (12 (|\rho|^2 - |\tau|^2) \bar{\Psi}_2 + cc) \Phi_{11}+18\Psi_2\bar{\Psi}_2(|\rho|^2+|\tau|^2).\label{DPDPbarKN}
\end{eqnarray}

Using these identities, we find that $Q_1$, $Q_2$ and $Q_3$ are now polynomials in terms of $\Phi_{11}$. 

\beq Q_1 = \frac{8}{9} \frac{( {\cal {R}}[ \bar{\Psi}_2^2 ( \rho^2 - \tau^2  ) ]  ) \Phi_{11}^2 }{|\Psi_2|^5} + \frac{8}{3} \frac{{\cal {R}} [ (\rho^2 + \tau^2) \bar{\Psi}_2 ] \Phi_{11} }{|\Psi_2|^4} + \frac{ 2{\cal {R}} [ \rho^2 - \tau^2]  }{|\Psi_2|},  \eeq

\beq Q_2 &=& \frac{4}{81} \frac{( \bar{\tau} \rho + \tau \bar{\rho})^2 \Phi_{11}^4}{|\Psi_2|^6}  + \frac{32}{27} \frac{{\cal {R}} [ \Psi_2( \tau \bar{\rho} - \rho \bar{\tau}) ( \bar{\tau} \rho + \bar{\rho} \tau) \Phi_{11}^3 }{|\Psi_2|^6} \nonumber \\ 
&& + \frac{16}{9} \frac{[|\Psi_2|^2 {\cal R}[ (\rho \bar{\tau})^2] + {\cal R}[ \Psi_2(\bar{\tau}\rho - \bar{\rho} \tau)^2]] \Phi_{11}^2}{|\Psi_2|^6} \nonumber \\
&&+ \frac{8}{3} \frac{(\rho \bar{\tau} + \bar{\rho} \tau )( {\cal R} [ \Psi_2(\bar{\tau} \rho- \bar{\rho} \tau) ]) \Phi_{11}}{|\Psi_2|^4} +  \frac{2(\rho \bar{\tau} + \bar{\rho} \tau)^2 }{|\Psi_2|^2}, \eeq

\beq Q_3 = \frac{4}{9} \frac{(|\rho|^2 + |\tau|^2) \Phi_{11}^2}{|\Psi_2|^3} + \frac{4}{3} \frac{{\cal R} [ \Psi_2 (|\rho|^2 -|\tau|^2)] \Phi_{11} }{|\Psi_2|^3} + \frac{ (|\rho|^2 + |\tau|^2) }{|\Psi_2| }. \eeq

\noindent Due to the $\Phi_{11}$ linear term in $Q_1$, it no longer detects the ergosurface.

\item {\bf Reissner-Nordstr{\"o}m-(Anti)-de Sitter Metric}

This is just a special case of the Kerr-Newman solution where $\tau = \pi = 0$, $\bar{\rho} = \rho$ and $\bar{\Psi}_2 = \Psi_2$, implying that $Q_2 = 0$ and

\beq Q_1 = 2Q_3 = \frac{8 \rho^2 \Phi_{11}^2}{9 \Psi_2^3} + \frac{8\rho^2 \Phi_{11}}{3 \Psi_2^2} + \frac{2\rho^2}{\Psi_2}. \eeq

\noindent Unsurprisingly $Q_1 \propto ||d\Psi_2||^2$ due to \eqref{NoteQ1}. 
\end{itemize}

\section{Examples in 5D}


In this section we will apply the Cartan-Karlhede algorithm to 5D analogues of the black hole metrics studied in the previous section. In particular, we will show how the Cartan invariants are a more viable tool for locating the horizons than the corresponding SPIs generated by Theorem \ref{PSthrm}. While we have not included the invariants, we note that Theorem \ref{DPSthrm} will generate smaller extended Cartan invariants using the non-constant zeroth order Cartan invariants for each of these examples. 

As in the 4D examples, some of these solutions are special cases of the others. For example, the Tangherlini metric is a special case of the Reissner-Nordstr\"{o}m-(Anti)-de Sitter metric. The Tangherlini metric is also a special case of the simply rotating Myers-Perry metric which is in turn a special case of the Kerr-NUT-Anti-de Sitter metric.

\subsection{Tangherlini Metric}
The 5D pseudo-Riemannian analogue of the Schwarzschild metric is given by \cite{Tan63} 
	\begin{equation}
		ds^2=-\left(1-\frac{r_s^2}{r^2}\right)dt^2+\left(1-\frac{r_s^2}{r^2}\right)^{-1}dr^2+r^2\left(d\theta^2+\sin^2{\theta}d\phi^2+\sin^2\theta\sin^2{\phi} d\omega^2\right),
	\end{equation}
where $r_s^2 =2M$ is the Schwarzschild radius. In order to try to detect the event horizon using Cartan invariants, we employ the higher-dimensional analogue of the NP Formalism \citep{ref4}. Defining an orthonormal frame by
\beq
	e_0=\sqrt{1-\frac{r_s^2}{r^2}}dt, \quad e_1=\sqrt{\left(1-\frac{r_s^2}{r^2}\right)^{-1}}dr, \quad  
	 e_2=rd\theta, \quad e_3=r\sin{\theta}d\phi, \quad e_4=r\sin{\theta}\sin{\phi}d\omega,
\eeq
\noindent we produce the half-null frame
\beq
	l=\frac{e_1-e_0}{\sqrt{2}} , \quad n=\frac{e_0+e_1}{\sqrt{2}}, \quad m^{2}=e_2,~~m^{3}=e_3, \quad m^{4}=e_4.&
\eeq


For the zeroth iteration of the Cartan-Karlhede algorithm, we obtain seven nonzero components of the Weyl tensor.
However, these components are algebraically dependent on the following components\footnote{ To display components here and in the following subsections we will repeat indices; this will not indicate summation, unless indicated by a repeated index being raised.} \citep{CHOL2012}:

\beq & C_{0101} = \frac13 C_{0i1i} = -\frac{6 r_s^2}{2r^4},& \eeq

%

\noindent At zeroth order, only null rotations alter the form of the Riemann tensor. The elements of the invariance group at zeroth order $H_0$ consists of rotations and boosts, and hence is four-dimensional. We write the sole linearly independent component as $C_0 = -\frac{r_s}{r^4}$

At first order, the invariance group $H_1$ is the group of spatial rotations specified by three parameters. The one-dimensional subgroup of boosts alters the form of the first covariant derivative of the Riemann tensor, using this we have fixed the boosts by setting the component $C_{0101;1} = 1$: 

\begin{eqnarray} && C_{0101;1} = 3 C_{0i1i;1} = - 3 C_{ijij;1} =3C_{011i;i} = 6C_{1iij;j}= 1,   \\
&&C_{0101;0} = 3C_{0i1i;0} = -3C_{ijij;0} = 3 C_{010i;i} = -6C_{0iij;j} =  \frac{72(r^2-r_s^2)r_s^2}{r^{12}} ,
\end{eqnarray}
\noindent To complete the algorithm one must compute the second covariant derivative of the Weyl tensor, revealing that $t_1=t_2 = 1$ and $dim(H_1)=dim(H_2) = 3$, thus the algorithm stops at second order. 
 

Notice that all positive b.w. terms detect the horizon at first order. In fact, for all higher order derivatives of the Weyl and Ricci tensors, the positive b.w. terms will vanish on the horizon, suggesting that the geometric horizon conjecture for weakly isolated horizons is valid in higher dimensions \citep{ADA2017,AD2017}. At first order the SPI $I_3$ vanishes on the horizon $r=r_s$
\begin{equation}
	I_3=R^{abcd;e}R_{abcd;e}=\frac{2^6 3^3 r_s^4(r_s^2-r^2)}{r^{12}}.
\end{equation}
\noindent Alternatively, since the cohomogeneity of the Tangherlini metric is $n=1$, we can compute the norm of the exterior derivative of $$I_1=R^{abcd}R_{abcd}= \frac{72r_s^4}{r^8},$$
\noindent which yields:
\beq ||\mathrm{d}I_1 ||^2 =\frac{ 2^{12} 3^4 r_s^8 (r_s^2-r^2) }{r^{20}}. \eeq

\subsection{5D Reissner-Nordstr{\"o}m-(Anti)-de Sitter Metric}
From \cite{stability}, the metric is
\beq ds^2 = -f(r) dt^2 + \frac{dr^2}{f(r)} + r^2 d S_3 , \qquad  f(r) = 1- \frac{2M}{r^2} -\frac{\Lambda r^2}{6} + \frac{Q^2}{r^4}, \eeq
\noindent where $dS_3$ is the line element for the unit 3-sphere. 
\noindent We use the following orthonormal frame:
\beq  e_0 = \sqrt{ f(r)} dt, \quad e_1 = \sqrt{\frac{1}{f(r)}} dr, \quad 
 e_2 = r d \theta,~~e_3 = r\sin(\theta) d \phi, \quad e_4 = r\sin(\theta) \sin(\phi) d \omega.   \eeq
\noindent From which we build the half-null frame: 
\beq l= \frac{1}{\sqrt{2}}(e_1 - e_0) , \quad 
 n=\frac{1}{\sqrt{2}}(e_0 + e_1), \quad m_2 = e_2, \quad  m_3 = e_3, \quad  m_4 = e_4.  \eeq
\noindent In this frame, $l$ and $n$ are WANDs; to see this we compute the components of the Weyl and Ricci tensor:
\beq & R_{01} = \frac{2(\Lambda r^6 - 6 Q^2)}{3r^6},~~ R_{ii} = \frac{2(\Lambda r^6 + 3 Q^2)}{3r^6},~~i\in[2,4]&, \eeq
\beq & C_{0101} = 3 C_{0i1i} = \frac{3}{2} \frac{4Mr^2 - 5Q^2}{r^6},& \eeq

\noindent with the remaining nonzero components
$C_{ijij}~~i,j \in [2,4], i \neq j$ algebraically dependent on $C_{0101}$. That is, relative to this frame, the only nonzero components are the b.w. zero terms.

At zeroth order, it can be shown that the isotropy group of the Weyl and Ricci tensor consists of boosts and any spatial rotation; hence dim $(H_0) = 4$. The number of functionally independent invariants is $t_0 =1 $. Continuing the Cartan-Karlhede algorithm, we compute the covariant derivative of the Weyl and Ricci tensor: 
\small
\begin{eqnarray} && R_{01;1} = -4 R_{1i;i} = -2 R_{jj;1}=  \frac{8Q^2}{8Mr^2-15 Q^2},~~R_{01;0} = -4 R_{0i;i} = -2 R_{jj;0} =  -\frac{36(8Mr^2-15Q^2)f(r) Q^2}{r^{14}}, \\
&& C_{0101;1} = 3 C_{0i1i;1} =  -3 C_{ijij;1} = 1,  \\
&&C_{0101;0} = 3C_{0i1i;0} = -3C_{ijij;0} =  -\frac{9}{2} \frac{(8Mr^2 - 15 Q^2)^2 f(r)}{r^{14}} ,\\
&& C_{011i;i} = 2 C_{1iij;j} =  \frac23 \frac{4Mr^2-5Q^2}{8Mr^2-15Q^2},  \\ 
&& C_{100i;i} = 2 C_{0iij;j} =  \frac{3(8Mr^2 - 15 Q^2)  f(r)(4Mr^2-5Q^2)}{r^{14}}.  \end{eqnarray}
\normalsize
\noindent Here we have fixed the boosts by setting the component $C_{0101;1} = 1$. Through direct inspection, it is clear that spatial rotations have no effect on the first order Cartan invariants, hence dim $(H_1) = 3$. The number of functionally independent invariants remains $t_1 =1$. The Cartan-Karlhede algorithm continues for one more iteration since $t_1 = t_2 = 1$ and dim $(H_1)$ = dim $(H_2) = 3$.  

As in the Tangherlini metric, all positive b.w. terms detect the horizon at first order. Since the cohomogeneity is $n=1$, we may produce a SPI that detects the horizon using $I_1 = C_{abcd}C^{abdc}$:

\beq ||\mathrm{d}I_1 ||^2 =\frac{ 2^6  (4Mr^2-5Q^2)^2(8Mr^2-15Q^2)^2f(r)}{r^{26}}. \eeq 
\noindent This invariant will vanish at $r^2 = \frac{5Q^2}{4M}$ and $r^2 = \frac{15 Q^2}{8M}$ as well.

\subsection{5D Simply Rotating Myers-Perry Metric}
The simply rotating Myers-Perry metric is a 5D analogue of the Kerr metric \citep{myeper86,PP2005}. The metric is:
\begin{eqnarray}
\mathrm{d}s^2 &=& -{\frac{1-x}{1-y}}(\mathrm{d}t+R\sqrt{\nu}(1+y)\mathrm{d}\psi)^2+\frac{R^2}{(x-y)^2}[(x-1)((1-y^2)(1-\nu y)\mathrm{d}\psi^2 \\
	&+& \frac{\mathrm{d}y^2}{(1+y)(1-\nu y)})+(1-y)^2(\frac{\mathrm{d}x^2}{(1-x^2)(1-\nu x)} + (1+x)(1-\nu x)\mathrm{d}\phi^2)]. \nonumber
\end{eqnarray}

\noindent  We first define a non-orthogonal half-null frame $\{L_+, L_-, \partial_\phi, \partial_y, \partial_\psi \}$ where:

\beq
L_\pm = \frac{1}{(x^2 -1)(\nu y-1)} \left(\frac{\nu yx-y+\nu x+1-2\nu y}{x-y}R\partial_t -\sqrt{\nu}\partial_\psi \right) \pm \sqrt{\frac{\nu x-1}{(x-y)(y-1)}} \left(\partial_x + \frac{y^2-1}{x^2-1}\partial_y \right).
\eeq

\noindent \cite{PP2005} suggest using a half-null frame $\{l, n, m^2, m^3, m^4\}$ with $l\propto L_+$ and $n\propto L_-$, since in this frame the only nonzero components of the Weyl tensor are those with boost weight zero (i.e. $l$ and $n$ are the WANDs of the simply rotating Myers-Perry metric). We thus start with $\{L_+, L_-, \partial_\phi, \partial_y, \partial_\psi \}$, normalize $L_+$ and $L_-$, and then use the Gram-Schmidt procedure to obtain $\{l, n, m^2, m^3, m^4\}$.

For the zeroth iteration of the Cartan-Karlhede algorithm, we obtain 10 nonzero components of the Weyl tensor.
However, these components are functionally dependent on any two components (say, for example, $C_{0101}$ and $C_{0212}$); thus $t_0 = 2$. As expected, only components with zero boost weight show up, therefore the Weyl tensor is invariant under a boost. Spatial rotations about $m_{4}$ do not change the components of the Weyl tensor. To see why, consider the matrices defined in Table 1 in \cite{CHOL2012}:

\beq C_{0101} = \frac{(x-y)^2(4\nu x+\nu -3)}{4(y-1)^2 R^2}, \eeq

\begin{equation}
M_{ij} = C_{0i1j} =
\begin{pmatrix}
 \frac{1}{4} \frac{(x-y)^2 (2\nu x+\nu -1)}{(y-1)^2 R^2} & -\frac{1}{2} \frac{\sqrt{(1-\nu x)(\nu )(x+1)}(x-y)^2}{(y-1)^2 R^2} & 0 \\
\frac{1}{2} \frac{\sqrt{(1-\nu x)(\nu )(x+1)}(x-y)^2}{(y-1)^2 R^2} & \frac{1}{4} \frac{(x-y)^2 (2\nu x+\nu -1)}{(y-1)^2 R^2} & 0 \\
0 & 0 & -\frac{1}{4} \frac{(x-y)^2 (\nu +1)}{(y-1)^2 R^2} 
\end{pmatrix},
\end{equation}

\begin{equation}
A_{ij} = C_{01ij} =
\begin{pmatrix}
0 & \frac{\sqrt{(1-\nu x)(\nu )(x+1)}(x-y)^2}{(y-1)^2 R^2} & 0 \\
-\frac{\sqrt{(1-\nu x)(\nu )(x+1)}(x-y)^2}{(y-1)^2 R^2} & 0 & 0 \\
0 & 0 & 0
\end{pmatrix}.
\end{equation}


\noindent Since $A_{ij}=\epsilon_{ijk} w^k$, rotations about $m_4$ do not change $A_{ij}$. And from the form of $M_{ij}$, it follows that $M_{ij}$ is unaffected by spatial rotations about $m_4$. Therefore $\dim(H_0)=2$. 

%
%

At first order of the Cartan-Karlhede algorithm, there are several nonzero components of the first covariant derivative of the Weyl tensor. No new functionally independent invariants appear, and the remaining isotropy may be fixed by applying a boost to set $C_{0101;1} = 1$ and $C_{0101;3}=0$; therefore, $t_1 = 2$ and $dim(H_1)=0$. The algorithm would carry on for one more iteration, since $t_1=t_2 = 2$ and dim $(H_1)$ = dim $(H_2) = 0$, we will omit these details. 

We note that the following component at first order detects the horizon which is located at $x=y=1/\nu$:

\begin{eqnarray}
C_{0101;0} &=& \frac98 \frac{(x-y)^5 (2\nu x+\nu-1)^2 (\nu x -1) (\nu y -1) (x-1)}{(y-1)^6 R^6 (\nu+1)}.  \end{eqnarray}

\noindent In fact, all positive b.w. components of the covariant derivative of the Weyl tensor vanish, and similarly for all higher order derivatives. 

%
%
%

As an alternative using SPIs, define $I_1 = C_{abcd}C^{abdc}$ and $J_1 = C_{abcd}C^{abef}C^{cd}_{~~ef}$, and applying Theorem \ref{PSthrm}:
\beq &
||\mathrm{d}I_1 \wedge \mathrm{d}J_1||^2 =
 \frac{-2^4 3^4 \nu^2 (1+y)(x+1)(x-1)^2(1-\nu x)(1-\nu y)(x-y)^{22}(\nu+1)^4(8\nu^2 x^2 + 8\nu^2 x +3\nu^2 -8\nu x -2\nu +3)^2}{(y-1)^{24}R^{24}},\nonumber \eeq

\noindent which will vanish on the horizon.

\subsection{5D Kerr-NUT-Anti-de Sitter Metric}

For the 5D Kerr-NUT-Anti-de Sitter solution, we will use the metric relative to the coordinate system given by equations (22)-(23) in \cite{generalkerr}:

\begin{equation}
\mathrm{d}s^2 = \frac{\mathrm{d}x_1^2}{Q_1}+ \frac{\mathrm{d}x_2^2}{Q_2} + Q_1 \left(\mathrm{d}\psi_0+x_2^2 \mathrm{d}\psi_1\right)^2 +Q_2 \left(\mathrm{d}\psi_0+x_1^2 \mathrm{d}\psi_1\right)^2-\frac{c_0}{x_1^2 x_2^2} \left( \mathrm{d}\psi_0 + \left(x_1^2 + x_2^2 \right) \mathrm{d}\psi_1 + x_1^2 x_2^2 \mathrm{d}\psi_2 \right)^2
\end{equation}

\noindent where 
\beq Q_1 = \frac{X_1}{U}, Q_2 = -\frac{X_2}{U}, U = x_2^2-x_1^2,~~ X_1 = c_1 x_1^2 + c_2 x_1^4 + \frac{c_0}{x_1^2} -2b_1, \text{ and } X_2 = c_1 x_2^2 + c_2 x_2^4 + \frac{c_0}{x_2^2} -2b_2. \eeq

The constants $c_0, c_1, c_2, b_1, b_2$ are free parameters, which are related to the rotation parameters $a_1$, $a_2$, the  mass and NUT charge $M_1, M_2$, and the cosmological constant $\Lambda$ as follows:

\beq
c_0 = a_1^2 a_2^2, \quad c_1 = 1-\frac{\Lambda^2}{4}(a_1^2 + a_2^2) , \quad 
c_2 = \frac{\Lambda}{4} , \quad
 b_\mu = \frac{1}{2} (a_1^2 + a_2^2 - a_1^2 a_2^2 \frac{\Lambda^2}{4}) - M_\mu , \quad \mu = 1, 2.
\eeq

\noindent This metric has been Wick rotated and so it no longer has a Lorentzian signature. This will lead to complex null vectors relative to this coordinate system. However, relative to the original coordinates in \cite{generalkerr} they will be real. 

We first define an orthonormal frame: 
\beq &e_0 = \frac{\mathrm{d}x_1}{\sqrt{Q_1}}, \quad e_1 = \frac{\mathrm{d}x_2}{\sqrt{Q_2}}, &  \\
& e_2 = \sqrt{Q_1} \left(\mathrm{d}\psi_0+x_2^2 \mathrm{d}\psi_1\right), \quad e_3 = \sqrt{Q_2} \left(\mathrm{d}\psi_0+x_1^2 \mathrm{d}\psi_1\right), \quad e_4 = \frac{\sqrt{-c_0}}{x_1 x_2} \left( \mathrm{d}\psi_0 + \left(x_1^2 + x_2^2 \right) \mathrm{d}\psi_1 + x_1^2 x_2^2 \mathrm{d}\psi_2 \right). & \eeq 
\noindent Then, according to \cite{kerrcurvature} and \cite{KNAwands}, the WANDs are simply the null vectors $n$ and $\ell$ in the following half-null frame: 
\beq  l=\frac{i}{\sqrt{2Q_2}} (e_1 +i e_3), \quad n = -i\sqrt{\frac{Q_2}{2}}(e_1 -i e_3), \quad  
 m_2 = e_0, \quad m_3 = e_2, \quad m_4 = e_4.   \eeq

Using the WANDs in this half-null frame, it may be shown that any of the components are functionally dependent on the choice of two components at zeroth order. Thus $t_0 = 2$. All components are of b.w. zero and they do not change under a rotation about $m_4$. To see why, we express the Weyl tensor components as the following matrices as defined by Table 1 in \citep{CHOL2012}:

\beq C_{0101} &= -\frac{2(x_1^2 + 3x_2^2)(b_1-b_2)}{U^3}, \eeq 

\begin{equation}
M_{ij} = C_{0i1j} =
\begin{pmatrix}
-\frac{2(x_1^2+x_2^2)(b_1-b_2)}{U^3} & \frac{4ix_1 x_2 (b_1-b_2)}{U^3} & 0 \\
-\frac{4ix_1 x_2 (b_1-b_2)}{U^3} & -\frac{2(x_1^2+x_2^2)(b_1-b_2)}{U^3} & 0 \\
0 & 0 & -\frac{-2(b_1-b_2)}{U^2}
\end{pmatrix},
\end{equation}

\begin{equation}
A_{ij} = C_{01ij} =
\begin{pmatrix}
0 & \frac{	8ix_1 x_2 (b_1-b_2)}{U^3} & 0 \\
-\frac{8ix_1 x_2 (b_1-b_2)}{U^3} & 0 & 0 \\
0 & 0 & 0
\end{pmatrix}.
\end{equation}

\noindent We note that this is a vacuum solution and so $R_{ij} =
\Lambda g_{ij}$. Since $A_{ij}=\epsilon_{ijk} w^k$, rotations about
$m_4$ do not change $A_{ij}$. And from the form of $M_{ij}$, it follows that $M_{ij}$ is
  unaffected by spatial rotations about $m_4$. Thus
  $\dim(H_0)=2$.

At first iteration, we have several non-trivial components, but they
are all functionally dependent on the two functionally independent
invariants at zeroth order, $t_1 = 2$. We can fix the remaining
isotropy by applying a boost to set $C_{0101;1} =  1$: a rotation about $m_4$ is not needed as
our frame already gives the canonical choice $C_{0101;3} = 0$. Therefore, $\dim(H_1)=0$. The algorithm would carry on for one more iteration, since $t_1=t_2 = 2$ and dim $(H_1)$ = dim $(H_2) = 0$; however, we will omit these details. Instead of listing components of the covariant derivative of the Weyl tensor, we note that the following components at first order detect the horizon, which occurs when the function $X_2 = 0$:
\begin{equation}
C_{0101;0}=  \frac{2^5\cdot 3^2 (x_1^2+x_2^2)^2(b_1-b_2)^2 x_2^2 Q_2}{U^8}
\end{equation}

To determine the location of the event horizon, we may compute the
expansion of the trivially-boosted\footnote{The boost parameter used was chosen to simplify calculations. We note that this parameter is not well-defined for the original purpose of the Cartan-Karlhede algorithm}  $\ell$ \citep{KNAwands}, $\theta_{(\ell)} = \frac{1}{3} h^{ab} \ell_{(a;b)} $, where $h_{ab} = g_{ab} - \ell_{(a} n_{b)}$: 
\beq \theta_{(\ell)} =   \frac{4(x_1^2-3x_2^2)(x_1^2+x_2^2) Q_2(b_1-b_2)}{U^5}. \label{KNAexpansion} \eeq 
As in the previous example, all of the positive b.w. components of the covariant derivative of the Weyl tensor vanish on the horizon, and similarly for all higher order derivatives of the Weyl tensor. Applying Theorem \ref{PSthrm}, we may produce a SPI that detects the horizon using $I_1 = C_{abcd}C^{abdc}$ and $J_1 = C_{abcd}C^{abef}C^{cd}_{~~~ef}$:
\beq
||\mathrm{d}I_1 \wedge \mathrm{d}J_1||^2 = \frac{ 2^{37} 3^{4}  
 (3x_1^4 +2x_1^2x_2^2+3x_2^4)^2 x_1^2 x_2^2 X_1 X_2
 (b_1-b_2)^{10}}{(x_1-x_2)^{30} (x_1+x_2)^{30}}.
\eeq
\noindent Alternatively, we could use Theorem \ref{DPSthrm} to produce an
extended Cartan invariant from the non-constant zeroth order Cartan
invariants that will detect the horizon and will be of lower order
than the above SPI.

\section{Conclusion}

We have shown that it is possible to locate the event horizon of any stationary, asymptotically flat (or (anti) de Sitter) black hole from the zeros of Cartan invariants. Our work complements the related results on the detection of stationary horizons using SPIs \citep{AbdelqaderLake2015,PageShoom2015}. Our approach has a notable advantage in that it is computationally less expensive compared to the related SPIs. In the reviewed examples we have also computed extended Cartan invariants whose zeros only occur on the surface of the stationary horizons, and the related SPIs \citep{PageShoom2015} are computed for each solution as a comparison. In  4D, we employ the NP formalism relative to the frame arising from the Cartan-Karlhede algorithm to demonstrate the relationship between the SPIs and the Cartan invariants. 

{While we have only considered stationary horizons with spherical topology, in higher dimensions other topologies are permitted for the horizon. For example, the 5D black rings have horizon topology $S^1 \times S^2$. For the rotating and supersymmetric black rings, it has been shown that the approach based on Cartan invariants will detect the horizon \citep{blackring}. Furthermore, the results of \cite{blackring} show that the Cartan-Karlhede algorithm can be implemented to produce Cartan invariants that detect the horizon even when WANDs are not known. This indicates that the Cartan-Karlhede algorithm can be implemented in dimensions $D \geq 5$ and that the resulting invariants will be easier to compute than the related SPIs.} 

In future work we will consider the horizons of solutions containing more than one black hole, including the analytical example of the Kastor-Traschen solution \citep{kastor}. This dynamical extension may allow us to follow the formation of the event horizon during the merger of two black holes, during the  phase of collapse of a star into a single black hole \citep{collapse}, and perhaps even the disappearance of the horizon during the evaporation of a single black hole  \citep{evaporation}.  We will also extend our method to the study of evolving event horizons for time dependent metrics, including metrics currently used for cosmological modelling. We hope that these results will play an important role in numerical relativity in which configurations of many black holes are evolved in time \citep{shapiro}, and a sharp localization of the event horizons is required.

\section*{Acknowledgements}  
The authors would like to thank Jan {\r A}man and Sebastian Jan Szybka for checking the calculations in the paper.  
The work was supported by NSERC of Canada (A.A.C, A.F., D.B.)  and
AARMS (D.G.) and through the Research Council of Norway, Toppforsk
grant no. 250367: Pseudo-Riemannian Geometry and Polynomial Curvature
Invariants: Classification, Characterisation and Applications
(D.M.). We are also grateful to the authors of the free software
  GRtensorII, Reduce and Sheep used in checking our calculations.

\bibliographystyle{plainnat}
\bibliography{GRstrings,long_v4}

\end{document}